\newtheorem{definition}{Definition}
\newtheorem{theorem}{Theorem}
\newtheorem{num-example}{Numerical example}
\newenvironment{sqcases}{%
	\matrix@check\sqcases\env@sqcases
}{%
	\endarray\right.%
}
\def\env@sqcases{%
	\let\@ifnextchar\new@ifnextchar
	\left\lbrack
	\def\arraystretch{1.2}%
	\array{@{}l@{\quad}l@{}}%
}
\title{\LARGE \bf
	The conservative lock-in range for PLL with lead-lag filter \\and triangular phase detector characteristic
}
\author{Blagov~M.V.$^{a}$
	Kuznetsov~N.V.$^{b}$, 
	Lobachev~M.Y.$^{c}$, 
	Yuldashev~M.V.$^{d}$, 
	Yuldashev~R.V.$^{d}$  
	\thanks{$^{a}$Mikhail V. Blagov is with the Faculty of Mathematics and Mechanics,
		Saint Petersburg State University, Russia, 
		with the Faculty of Mathematical Information Technology,
		University of Jyv\"{a}skyl\"{a}, Finland}
	\thanks{$^{b}$Nikolay V. Kuznetsov is with the Faculty of Mathematics and Mechanics,
		Saint Petersburg State University, Russia, 
		with the Faculty of Mathematical Information Technology,
		University of Jyv\"{a}skyl\"{a}, Finland,
		with the Institute for Problems in Mechanical Engineering RAS, Russia
		{\tt\small nkuznetsov239@gmail.com}}%
	\thanks{$^{c}$Mikhail Y. Lobachev is with the Faculty of Mathematics and Mechanics,
		Saint Petersburg State University, Russia, 
		with the Industrial Management Department, LUT University, Finland}
	\thanks{$^{d}$Marat V. Yuldashev, Renat V. Yuldashev are with the Faculty of Mathematics and Mechanics,
	Saint Petersburg State University, Russia}
}
\begin{document}

	\maketitle
	\thispagestyle{empty}
	\pagestyle{empty}

	\begin{abstract}
		
	In the present work, a second-order PLL with lead-lag loop filter and triangular phase detector characteristic is analysed. 
	An exact value of the conservative lock-in range is obtained for the considered model.
	The solution is based on analytical integration of the considered model on the linear segments.

	\end{abstract}

	\section{INTRODUCTION}
	\label{sec:introduction}

The interest to study phase-locked loops (PLL) comes from their wide applications. 
Initially described by A.~Appleton in 1923 \cite{Appleton-1923} and H.~Bellescize \cite{Bellescize-1932}, these circuits became widely spread in wireless communications \cite{DuS-2010-communication, Rouphael-2014, BestKLYY-2016, Cho-2006, Ho-2005-Communication, Helaluddin-2008, rosenkranz2016receiver}, GPS navigation \cite{KaplanH-2017-GpS}, gyroscope systems \cite{AaltonenH-2010, KuznetsovKBTYY-2022-GN}, computer architectures \cite{Kolumban-2005, Best-2018}, and others. 

First ideas of mathematical analysis of such systems belong to Italian academician F.~Tricomi \cite{Tricomi-1933} and are based on the analysis of system phase portraits. 
These ideas were further developed in works of A.A.~Andronov \cite{AndronovVKh-1937}.
Fundamental monographs devoted to the problems of numerical simulation and analysis of PLL were published in 1966 by F.~Gardner \cite{Gardner-1966}, A.~Viterbi \cite{Viterbi-1966}, V.V.~Shakhgildyan, and A.A.~Lyakhovkin \cite{ShakhgildyanL-1966}. 
These books are devoted mostly to engineering approaches of two-dimensional PLL models analysis.  

In this article, we consider a PLL with lead-lag loop filter and triangular phase detector characteristic.
Nonlinear analysis of this model and estimates of the global stability domain were conducted in \cite{Kapranov-1956, Gubar-1961, Shakhtarin-1969, Lindsey-1972, EndoT-1986, Stensby-2011}.
Basing on these works, we analytically obtain an exact formula for the conservative lock-in range for the first time.
This characterisctic considers the ability of PLL to synchronize in a short time and related to the Gardner problem \cite{Gardner-2005-book, LeonovKYY-2015-TCAS}.

\section{Mathematical model and hold-in range}
\begin{figure}[h]
	\centering
	\includegraphics[width=\linewidth]{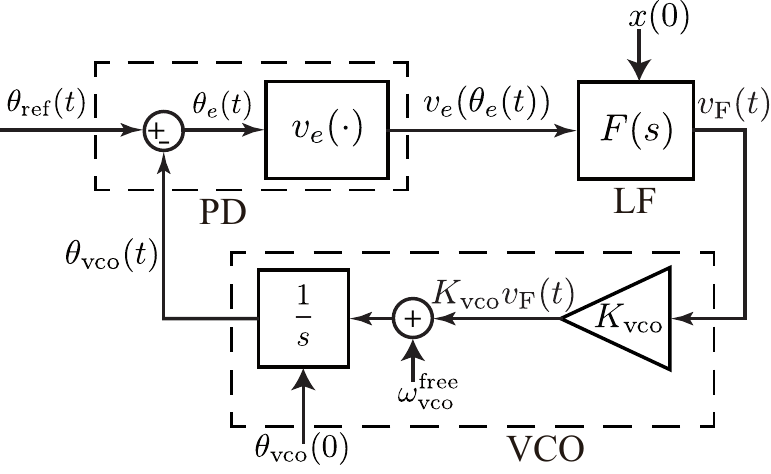}
	\caption{Baseband model of analog PLLs.}
	\label{fig:PLL-model}
\end{figure}

Consider analog PLL baseband model in Fig.~\ref{fig:PLL-model} \cite{Gardner-2005-book, Viterbi-1966, Best-2007, LeonovKYY-2012-TCASII, LeonovKYY-2015-SIGPRO}. 
Here $\theta_{\rm ref}(t) = \omega_{\rm ref}t + \theta_{\rm ref}(0)$ is a phase of the reference signal,
a phase of the VCO is $\theta_{\rm vco}(t)$, 
$\theta_e(t) = \theta_{\rm ref}(t) - \theta_{\rm vco}(t)$ is a phase error.
A phase detector (PD) generates a signal $v_e(\theta_e(t))$
where $v_e(\cdot)$ is a characteristic of the phase detector. 
In the present paper, a piecewise-linear PD characteristic, which is continuous and corresponds to square waveforms of the reference and the VCO signals, is considered:
\begin{equation}\label{eq:piecewise-linear PD characteristic}
\begin{aligned}
&v_e(\theta_e)
=
\begin{cases}
& \frac{2}{\pi}\theta_e - 4 m, 
\qquad - \frac{\pi}{2} + 2 \pi m \leq \theta_e(t) < \frac{\pi}{2} + 2 \pi m,\\
& - \frac{2}{\pi}\theta_e + 2 + 4m,
\qquad \frac{\pi}{2} + 2 \pi m \leq \theta_e(t) < - \frac{\pi}{2} + 2 \pi (m + 1),
\end{cases}
\end{aligned}
\end{equation}
here $m \in \mathbb{Z}$ (see Fig.~\ref{fig:triangular PD characteristic}).

\begin{figure}[h]
	\centering
	\includegraphics[width=0.9\linewidth]{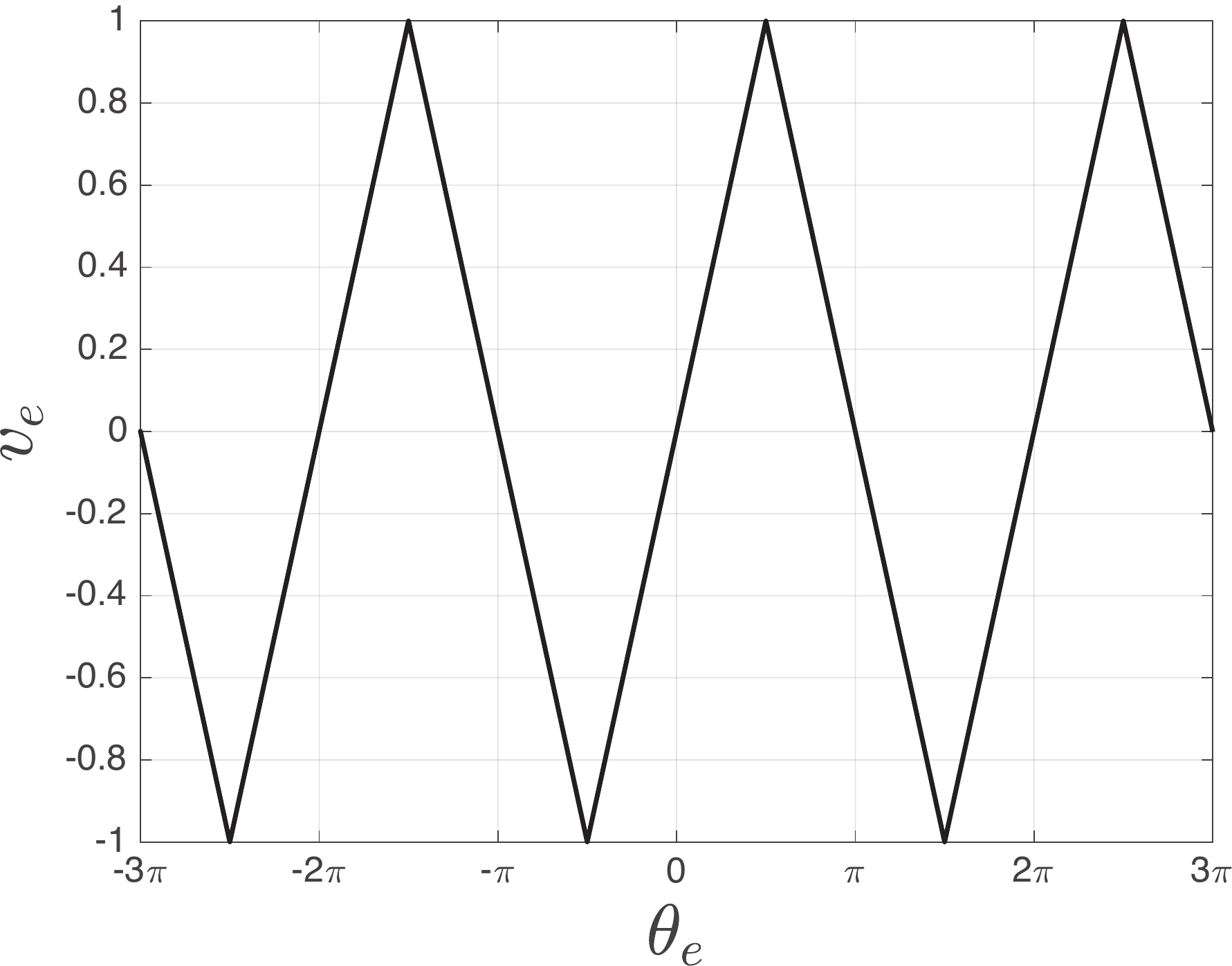}
	\caption{Triangular PD characteristic.}
	\label{fig:triangular PD characteristic}
\end{figure}

The state of the loop filter is represented by 
$x(t)\in\mathbb{R}$
and the transfer function is\footnote{If $\tau_2=0$ then such filter is called a lag filter, if $\tau_2>0$ then it is called a lead-lag filter \cite{Gardner-2005-book}.}
\begin{equation*}
\begin{aligned}
& F(s)=\frac{1+\tau_2s}{1+(\tau_1+\tau_2)s},\quad
\tau_1>0,\ \tau_2\ge0.
\end{aligned}
\end{equation*}
The output of the loop filter $v_{\rm F}(t) = \frac{1}{\tau_1+\tau_2}x + \frac{\tau_2}{\tau_1 + \tau_2}v_e(\theta_e)$ is used to control the VCO frequency $\omega_{\rm vco}(t)$, which is proportional to the control voltage:
\begin{equation*}
\begin{aligned}
& \omega_{\rm vco}(t) = \dot\theta_{\rm vco}(t) = \omega_{\rm vco}^{\rm free} + K_{\rm vco} v_{\rm F}(t)
\end{aligned}
\end{equation*}
where $K_{\rm vco}>0$ is a gain and $\omega_{\rm vco}^{\rm free}$ is a free-running frequency of the VCO. 

The behavior of PLL baseband model in the state space is described by a second-order nonlinear ODE:
\begin{equation}\label{eq:PLL-model}
\begin{aligned}
& \dot x = -\frac{1}{\tau_1 + \tau_2}x + \frac{\tau_1}{\tau_1 + \tau_2}v_e(\theta_e),\\
& \dot\theta_e = \omega_e^{\rm free} - K_{\rm vco}\Big(\frac{1}{\tau_1+\tau_2}x + \frac{\tau_2}{\tau_1 + \tau_2}v_e(\theta_e)\Big).
\end{aligned}
\end{equation}
where $\omega_e^{\rm free} = \omega_{\rm ref} - \omega_{\rm vco}^{\rm free}$ is a frequency error and $v_e(\theta_e)$ is defined in \eqref{eq:piecewise-linear PD characteristic}.
It is usually supposed that the reference frequency (hence, $\omega_e^{\rm free}$ too) can be abruptly changed and that the synchronization occurs between those changes.
Thus, existence of locked states, acquisition and transient processes after the reference frequency change are of interest.

The PLL baseband model in Fig.~\ref{fig:PLL-model} is locked if the phase error $\theta_e(t)$ is constant. 
For the locked states of practically used PLLs, the loop filter state is constant
too and, thus, the locked states of model in Fig.~\ref{fig:PLL-model} correspond to the equilibria of model \eqref{eq:PLL-model} \cite{KuznetsovLYY-2015-IFAC-Ranges}.

\begin{definition}\cite{KuznetsovLYY-2015-IFAC-Ranges, LeonovKYY-2015-TCAS, BestKLYY-2016}
	A \emph{hold-in range} is the largest symmetric interval of frequency errors $|\omega_e^{\rm free}|$ such that an asymptotically stable equilibrium exists and varies continuously while $\omega_e^{\rm free}$ varies continuously within the interval.
	\label{def:hold-in}
\end{definition}

Observe that system \eqref{eq:PLL-model} is $2\pi$-periodic in $\theta_e$ and has an infinite number of equilibria $(x^{\rm eq},\ \theta_e^{\rm eq})$ which satisfy
\begin{equation*}\label{eq:equations for equilibria}
\begin{aligned}
&v_e(\theta_e^{\rm eq}) = \frac{\omega_e^{\rm free}}{K_{\rm vco}},\\
&x^{\rm eq} = \frac{\tau_1\omega_e^{\rm free}}{K_{\rm vco}}.
\end{aligned}
\end{equation*}
From the boundedness of the PD characteristic it follows that there are no equilibria for sufficiently large $\omega_e^{\rm free}$.
Further we suppose that $\omega_e^{\rm free} < K_{\rm vco}$ and the equilibria are
\begin{equation}\label{eq:equilibria}
\begin{aligned}
&\left(\frac{\tau_1\omega_e^{\rm free}}{K_{\rm vco}},\ (-1)^m\frac{\frac{\pi}{2}\omega_e^{\rm free}}{K_{\rm vco}} + \pi m\right),\ m\in\mathbb{Z}.
\end{aligned}
\end{equation}
The characteristic polynomial of system \eqref{eq:PLL-model} linearized at stationary states \eqref{eq:equilibria} is
\begin{equation*}
\begin{aligned}
& \chi(\lambda) = \lambda^2 
+ 
\Big(\frac{1}{\tau_1 + \tau_2} + \frac{K_{\rm vco}\tau_2}{\tau_1 + \tau_2}v_e^\prime(\theta_e^{\rm eq})\Big)\lambda  
+ 
\frac{K_{\rm vco}}{\tau_1 + \tau_2}v_e^\prime(\theta_e^{\rm eq}).
\end{aligned}
\end{equation*}
The nonlinearity $v_e(\theta_e)$ decreases $\left(v_e^\prime(\pi - \frac{\frac{\pi}{2}\omega_e^{\rm free}}{K_{\rm vco}} + 2\pi m) = -\frac{2}{\pi} < 0\right)$ for $\frac{\pi}{2} + 2 \pi m \leq \theta_e(t) < - \frac{\pi}{2} + 2 \pi (m + 1)$, and equilibria 
\begin{equation*}
\begin{aligned}
& \left(\frac{\tau_1\omega_e^{\rm free}}{K_{\rm vco}},\ \pi - \frac{\frac{\pi}{2}\omega_e^{\rm free}}{K_{\rm vco}} + 2\pi m\right)
\end{aligned}
\end{equation*}
are saddles.
The nonlinearity $v_e(\theta_e)$ increases $\left(v_e^\prime(\frac{\frac{\pi}{2}\omega_e^{\rm free}}{K_{\rm vco}} + \pi m) = \frac{2}{\pi} > 0\right)$ for $- \frac{\pi}{2} + 2 \pi m \leq \theta_e(t) < \frac{\pi}{2} + 2 \pi m$, and equilibria
\begin{equation*}
\begin{aligned}
& \left(\frac{\tau_1\omega_e^{\rm free}}{K_{\rm vco}},\ \frac{\frac{\pi}{2}\omega_e^{\rm free}}{K_{\rm vco}} + 2\pi m\right)
\end{aligned}
\end{equation*}
are asymptotically stable ones, which can be either nodes, degenerate nodes or foci (see Appendix).
Since an asymptotically stable equilibrium exists for any frequency error $\omega_e^{\rm free} < K_{\rm vco}$, the hold-in range of model \eqref{eq:PLL-model} is $[0, \omega_h) = [0, K_{\rm vco})$ for any $\tau_1 > 0,\ \tau_2 \ge 0$.

\section{Global stability analysis}
\begin{definition}\cite{KuznetsovLYY-2015-IFAC-Ranges,LeonovKYY-2015-TCAS,BestKLYY-2016}
	A \emph{pull-in range} is the largest symmetric interval 
	of frequency errors $|\omega_e^{\rm free}|$ from the hold-in range such that an equilibrium is acquired for an arbitrary initial state.	
\end{definition}

\subsection{Pull-in range estimate by Lyapunov function}
To obtain an estimate for the pull-in range of system \eqref{eq:PLL-model}, 
we apply the direct Lyapunov method and the corresponding theorem on global stability for the cylindrical phase space
\begin{theorem}(see, e.g., \cite{LeonovK-2014-book, KuznetsovLYYKKRA-2020-ECC}).
	\label{th:Lyapunov-type lemma}
	If there is a continuous function $V(x, \theta_e): \mathbb{R}^{2}\to\mathbb{R}$ such that
	
	(i) $V(x, \theta_e + 2\pi) = V(x, \theta_e) \quad\forall x\in\mathbb{R}, \forall \theta_e\in\mathbb{R}$;
	
	(ii) for any solution $(x(t), \theta_e(t))$ of system \eqref{eq:PLL-model} the function $V(x(t), \theta_e(t))$ is nonincreasing;
	
	(iii) if $V(x(t), \theta_e(t))\equiv V(x(0), \theta_e(0))$, then $(x(t), \theta_e(t))\equiv~(x(0), \theta_e(0))$;
	
	(iv) $V(x, \theta_e)+\theta_e^2\to+\infty$ as $||x||+|\theta_e|\to+\infty$
	
	\noindent then  
	any trajectory of system \eqref{eq:PLL-model} tends to an equilibrium.
\end{theorem}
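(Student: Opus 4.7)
The plan is to adapt LaSalle's invariance principle to the cylindrical phase space induced by the $2\pi$-periodicity in $\theta_e$. Condition (i) lets $V$ descend to a continuous function on the cylinder $\mathbb{R}\times\mathbb{R}/2\pi\mathbb{Z}$; (ii) makes it non-increasing along trajectories; (iii) plays the role of a strengthened LaSalle invariance condition (constancy of $V$ along a trajectory forces the trajectory itself to be constant). The whole argument then consists of three stages: boundedness of the $x$-component, identification of the $\omega$-limit set with equilibria, and reduction to a single equilibrium.

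The key step, and the one I expect to be the main obstacle, is the passage from (iv) to the boundedness of $x(t)$ along an arbitrary forward trajectory $(x(t),\theta_e(t))$. Let $\tilde\theta_e(t)\in[0,2\pi)$ be the representative of $\theta_e(t)$ modulo $2\pi$. By (i) and (ii),
\begin{equation*}
V\bigl(x(t),\tilde\theta_e(t)\bigr)=V\bigl(x(t),\theta_e(t)\bigr)\le V\bigl(x(0),\theta_e(0)\bigr).
\end{equation*}
If $|x(t_n)|\to\infty$ along some sequence, then $|x(t_n)|+|\tilde\theta_e(t_n)|\to\infty$, and (iv) forces $V(x(t_n),\tilde\theta_e(t_n))+\tilde\theta_e(t_n)^2\to+\infty$; since $\tilde\theta_e(t_n)^2\le 4\pi^2$, this contradicts the uniform upper bound above. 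The auxiliary $\theta_e^2$ in (iv) is precisely what compensates for the fact that $V$ itself is only periodic, and not radially unbounded, in $\theta_e$; the \emph{unwrap--rewrap} trick just used is the non-standard ingredient with respect to the planar LaSalle setting.

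Once $x(t)$ is known to be bounded, the projection of the forward semi-trajectory to the cylinder lies in a compact set, and its $\omega$-limit set $\Omega$ is therefore non-empty, compact, connected and positively invariant. The function $V$ evaluated along the trajectory is non-increasing and bounded below on this compact set, so it has a limit $V^{\ast}$, and continuity of $V$ together with the definition of $\Omega$ yield $V\equiv V^{\ast}$ on $\Omega$. Any whole trajectory lying in $\Omega$ thus keeps $V$ constant in time, and (iii) collapses every such trajectory to an equilibrium, so $\Omega$ consists exclusively of equilibria.

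To conclude, I would observe that the equilibria \eqref{eq:equilibria} of system \eqref{eq:PLL-model} are isolated on the cylinder (they are listed explicitly above, separated by multiples of $\pi$ in $\theta_e$). Connectedness of $\Omega$ then forces it to be a single point, and the trajectory tends to a single equilibrium, which is exactly the asserted conclusion.
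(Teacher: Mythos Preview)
The paper does not prove this theorem: it is quoted as a known result from \cite{LeonovK-2014-book, KuznetsovLYYKKRA-2020-ECC} and then used as a black box to construct the Lyapunov function \eqref{eq:Lyapunov function lead-lag} and derive the pull-in estimate \eqref{eq:omega_p_Lyapunov}. There is therefore no in-paper proof to compare your attempt against.

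On its own merits your argument is the standard LaSalle-on-the-cylinder proof used in the cited references, and it is essentially correct. One step should be made explicit. In the final paragraph you conclude on the cylinder $\mathbb{R}\times(\mathbb{R}/2\pi\mathbb{Z})$ that the $\omega$-limit set is a single equilibrium, but the theorem's conclusion is phrased in the covering space $\mathbb{R}^2$. Convergence of the projected trajectory to an isolated point on the cylinder means that for large $t$ the angular coordinate $\theta_e(t)$ remains in a small arc and therefore cannot wind; hence $\theta_e(t)$ is eventually monotone-free and bounded in $\mathbb{R}$, the original trajectory in $\mathbb{R}^2$ is bounded, and it converges to a single lift of the cylinder equilibrium. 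Without this remark the argument leaves open the possibility that $\theta_e(t)\to\infty$ while $\theta_e(t)\bmod 2\pi$ converges. Also note that your appeal to the discreteness of the equilibria \eqref{eq:equilibria} is a property of system \eqref{eq:PLL-model} rather than of hypotheses (i)--(iv); this is legitimate here because the theorem is stated specifically for \eqref{eq:PLL-model}, but it is worth flagging since the general cylindrical LaSalle principle only yields convergence to the \emph{set} of equilibria.
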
\medskip

Following \cite{BakaevG-1965, LeonovK-2014-book}, consider the following Lyapunov function:
\begin{equation}\label{eq:Lyapunov function lead-lag}
\begin{aligned}
& V(x,\ \theta_e) = \frac{1}{2} (x - \frac{\tau_1\omega_e^{\rm{free}}}{K_{\rm vco}})^2
+\\
&+
\frac{\tau_1}{K_{\rm vco}}
\int\limits_0^{\theta_e} 
\Big(
v_e(\sigma) - \frac{\omega_e^{\rm free}}{K_{\rm vco}} + \beta_0|v_e(\sigma) - \frac{\omega_e^{\rm free}}{K_{\rm vco}}|
\Big)d\sigma
\end{aligned}
\end{equation}
where
\begin{equation*}
\begin{aligned}
& 
\beta_0=-
\frac{\int_0^{2\pi} (v_e(\sigma) - \frac{\omega_e^{\rm free}}{K_{\rm vco}})\ d\sigma}
{\int_0^{2\pi} |v_e(\sigma) - \frac{\omega_e^{\rm free}}{K_{\rm vco}}|\ d\sigma} > 0.
\end{aligned}
\end{equation*}
Such form of the integrand expression makes the Lyapunov function $2\pi$-periodic.
For triangular PD characteristic coefficient $\beta_0$ is
\begin{equation}\label{eq:beta_0}
\begin{aligned}
& \beta_0 = \frac{2\omega_e^{\rm{free}}K_{\rm vco}}{(\omega_e^{\rm{free}})^2+K_{\rm vco}^2}.
\end{aligned}
\end{equation}
The Lyapunov function derivative along the trajectories of system \eqref{eq:PLL-model} is
\begin{equation*}
\begin{aligned}
&\dot V(x, \theta_e)
=
-\frac{1}{\tau_1+\tau_2}\Big(
(x-\frac{\tau_1\omega_e^{\rm{free}}}{K_{\rm vco}})^2 
- \\
&-
\beta_0\tau_1(x-\frac{\tau_1\omega_e^{\rm{free}}}{K_{\rm vco}})(v_e(\theta_e) - \frac{\omega_e^{\rm free}}{K_{\rm vco}}) 
+\\
&+
\tau_1\tau_2(1 - \beta_0)(v_e(\theta_e) - \frac{\omega_e^{\rm free}}{K_{\rm vco}})^2
\Big).
\end{aligned}
\end{equation*}
If the loop filter parameters satisfy the inequality
\begin{equation}\label{eq:Sylvester criterion}
\begin{aligned}
& \beta_0 < 2(-\frac{\tau_2}{\tau_1} + \frac{\sqrt{\tau_2(\tau_1+\tau_2)}}{\tau_1})
\end{aligned}
\end{equation}
then the Lyapunov function derivative along the trajectories of system \eqref{eq:PLL-model} is as follows: 
\begin{equation*}
\begin{aligned}
&
\dot V(x,\theta_e)
< 0,\quad x \neq \frac{\tau_1\omega_e^{\rm{free}}}{K_{\rm vco}},\ v_e(\theta_e) \ne \frac{\omega_e^{\rm free}}{K_{\rm vco}}.
\end{aligned}
\end{equation*}
Since the derivative along any solution other than equilibria is not identically zero, condition \eqref{eq:Sylvester criterion} provides the global stability of the system.
Taking into account \eqref{eq:beta_0} and \eqref{eq:Sylvester criterion}, the following estimate for the pull-in range is obtained:
\begin{equation}\label{eq:omega_p_Lyapunov}
\begin{aligned}
& 
\omega_p >
\Big(
\frac{\tau_1}{2\sqrt{\tau_2(\tau_1+\tau_2)} - 2\tau_2} 
-\\
&-
\sqrt{\frac{\tau_1^2}{(2\sqrt{\tau_2(\tau_1 + \tau_2)} - 2\tau_2)^2} - 1}
\Big)
K_{\rm vco}.
\end{aligned}
\end{equation}

\subsection{Analysis of cycles of first and second kind}
Firstly, let us analyse the dissipativity domain. 
Consider the following Lyapunov function:
\begin{equation*}
\begin{aligned}
& V(x,\ \theta_e) = \frac{1}{2}\tau_1x^2.
\end{aligned}
\end{equation*}
Its derivative along the trajectories of system \eqref{eq:PLL-model} is:
\begin{equation*}
\begin{aligned}
&\dot V(x, \theta_e)
=
-\frac{\tau_1}{\tau_1 + \tau_2} x
\Big(
x - \tau_1v_e(\theta_e)
\Big).
\end{aligned}
\end{equation*}
If $|x| > \tau_1v_e(\theta_e)$, then $\dot V(x, \theta_e) < 0$.
Hence, $\limsup\limits_{t\to+\infty} |x(t)| < \tau_1$ and an estimate for the dissipativity domain is $|x(t)| < \tau_1$.

Using change of variables $z = - \frac{K_{\rm vco}}{\tau_1 + \tau_2}(x - \frac{\tau_1\omega_e^{\rm free}}{K_{\rm vco}})$, system \eqref{eq:PLL-model} becomes system (4.3) from \cite{LeonovSel-2002} with 
$\alpha = \frac{1}{\tau_1 + \tau_2}$, 
$\beta = \frac{1}{\tau_1 + \tau_2}K_{\rm vco}$, 
$a = \frac{\tau_2}{\tau_1 + \tau_2}K_{\rm vco}$.
Applying Theorem~4.1 from \cite{LeonovSel-2002} we get that any trajectory of system \eqref{eq:PLL-model} which is bounded in $\mathbb{R}^2$ tends to an equilibrium, hence, there are no the cycles of the first kind.
If there is a homoclinic orbit in the system, then it envelops an asymptotically stable equilibrium and a cycle of the second kind exists in this case due the dissipativity \cite{Gubar-1961} (thus, a homoclinic orbit does not determine the global stability and the pull-in range).

Thus, depending on the system parameters $K_{\rm vco},\ \tau_1,\ \tau_2$ there are three possibilities of the global stability loss in system \eqref{eq:PLL-model}:
\begin{itemize}
	\item disappearance of equilibria (in this case $[0, \omega_p) = [0, \omega_h)$
	\item appearance of separatrix cycle 
	\item appearance of semi-stable cycle (cycle of the second kind)
\end{itemize}
Applying Theorem~4.2 from \cite{LeonovSel-2002} it can be shown that there are no either the separatrix cycles or the cycles of the second kind in domain $x > x^{\rm eq}$.
Since the system is piecewise-linear, its trajectories can be analytically integrated (see Appendix) and exact frequency error values for separatrix and semi-stable cycles (hence, the pull-in range) can be obtained (see, e.g., \cite{Gubar-1961, Shakhtarin-1969, BlagovKLYY-2015, BlagovKKLYY-2016-IFAC, KuznetsovBAYY-2019}).

\section{Conservative lock-in range}

Although a PLL model can be globally stable, the acquisition process can take long time.
To decrease the synchronization time, a lock-in range concept is frequently exploited \cite{Gardner-2005-book, Kolumban-2005, Best-2007}.

\begin{definition}\cite{KuznetsovLYY-2015-IFAC-Ranges, LeonovKYY-2015-TCAS, BestKLYY-2016}
	A \emph{lock-in range} is the largest interval of frequency errors $|\omega_e^{\rm free}|$ from the pull-in range such that the PLL model being in an equilibrium, after any abrupt change of $\omega_e^{\rm free}$ within the interval acquires an equilibrium without cycle slipping ($\sup\limits_{t>0} |\theta_e(0) - \theta_e(t)| < 2\pi$).	
\end{definition}

From a mathematical point of view, system \eqref{eq:PLL-model} can initially be in an unstable equilibrium (at one of the saddles) or can acquire it by a separatrix after a change of $\omega_e^{\rm free}$ (see~\cite{KuznetsovBAYY-2019, KuznetsovLYYK-2020-IFACWC}). 
Corresponding behavior is not observed in practice: system state is disturbed by noise and can't remain in unstable equilibrium.
Thus, two cycle-slipping-related characteristics of the system can be considered: 
\textit{the lock-in range} $|\omega_e^{\rm free}|\in [0, \omega_l)$ where the equilibria are considered to be stable and 
\textit{the conservative lock-in range} $|\omega_e^{\rm free}|\in[0, \omega_l^c) \subset [0, \omega_l)$ which takes into account the unstable behavior described above.
In this article, we analyse the conservative lock-in range $[0, \omega_l^c)$.

\begin{figure*}[h]
	\centering
	\begin{minipage}[h]{0.4\linewidth}
		\includegraphics[width=\linewidth]{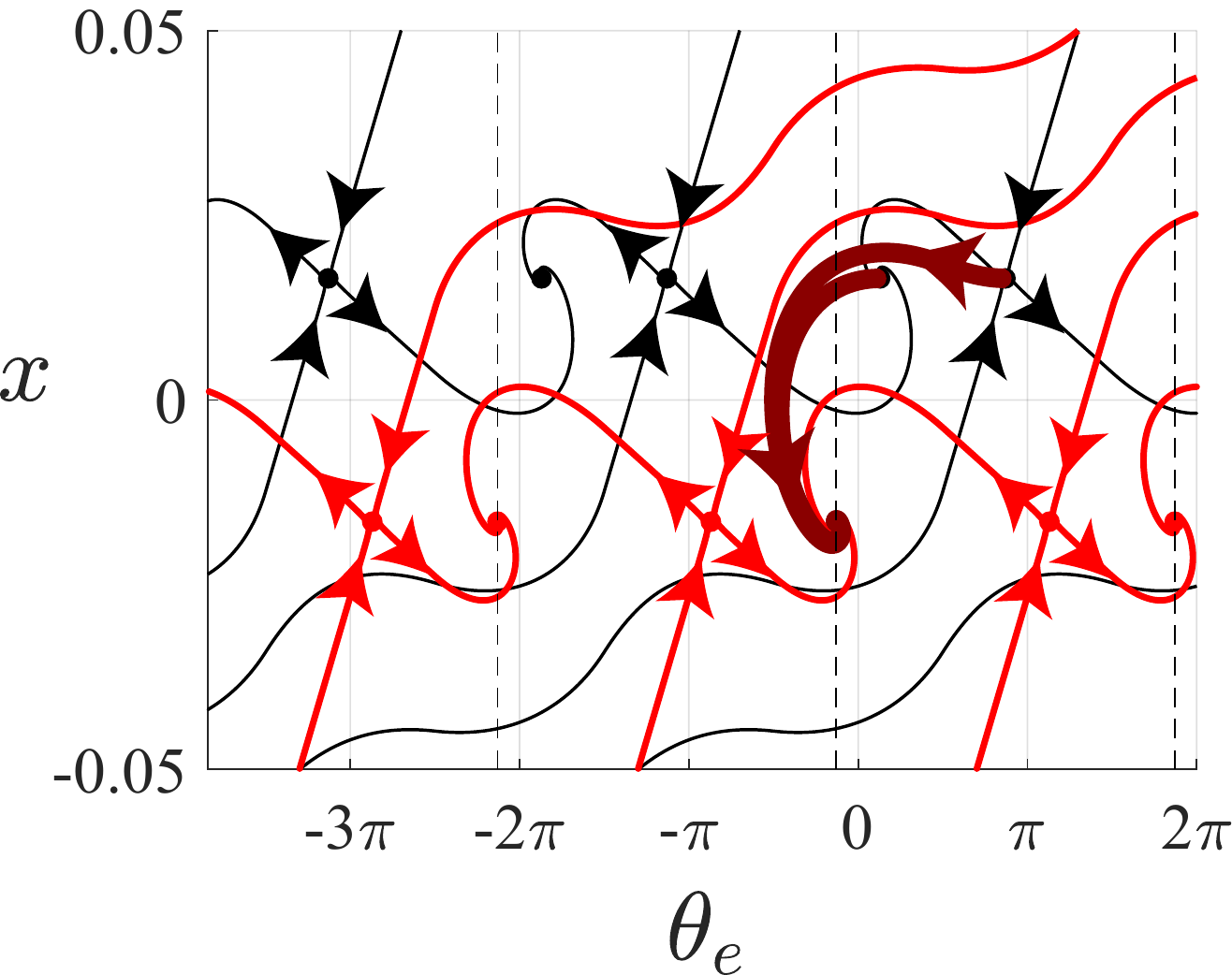}         	
	\end{minipage}
	$\qquad$
	\begin{minipage}[h]{0.4\linewidth}
		\includegraphics[width=\linewidth]{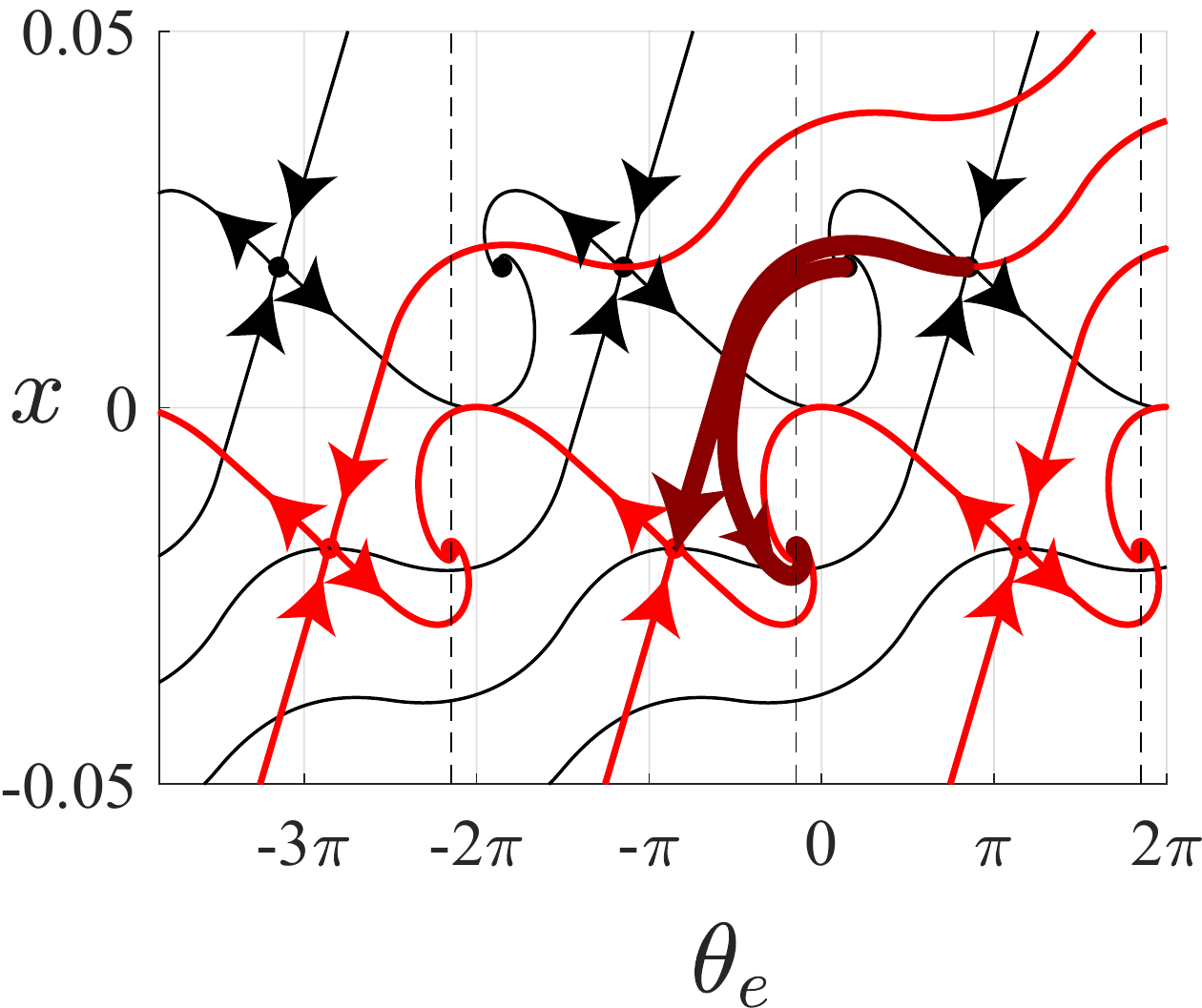}
	\end{minipage}
	
	\begin{minipage}[h]{0.4\linewidth}
		\includegraphics[width=\linewidth]{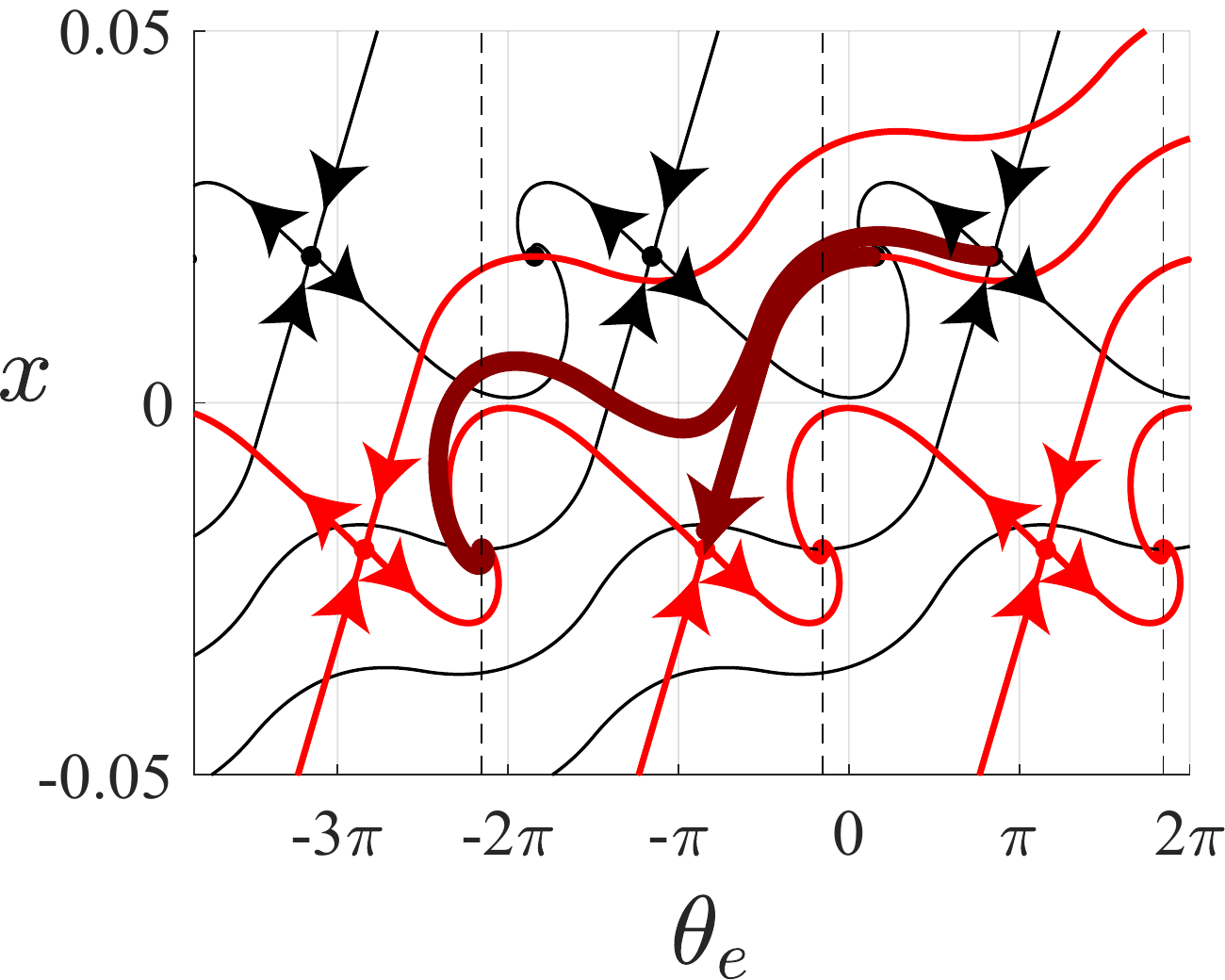}       	
	\end{minipage}
	$\qquad$
	\begin{minipage}[h]{0.4\linewidth}
		\includegraphics[width=\linewidth]{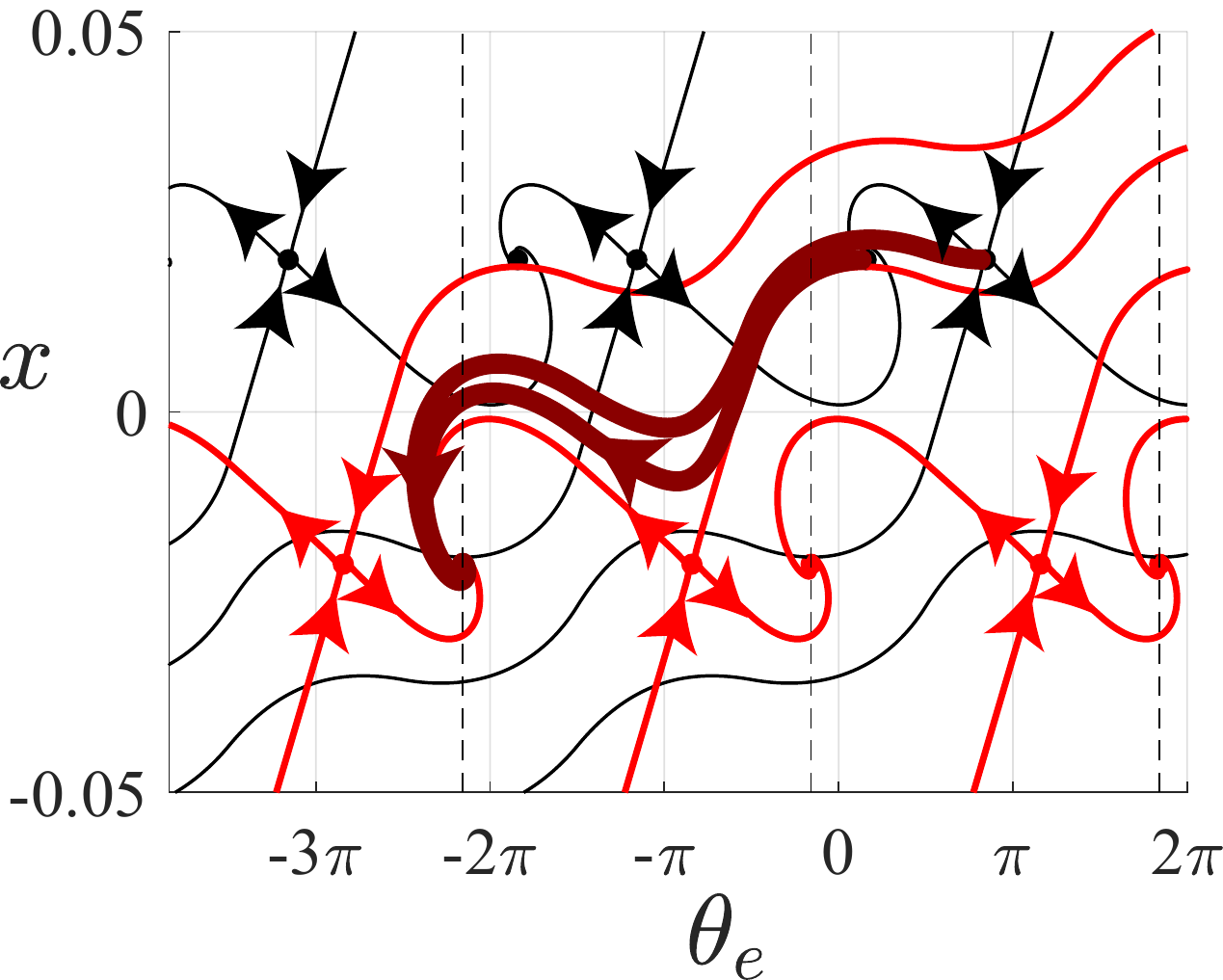}
	\end{minipage}
	\caption{Phase portraits for model \eqref{eq:PLL-model} with the following parameters:
		$F(s)=\frac{1+\tau_2s}{1+(\tau_1+\tau_2)s}$,
		$\tau_1 = 0.0633$,
		$\tau_2 = 0.0225$,
		$K_{\rm vco} = 250$.
		Black dots are equilibria of the model with positive $\omega_e^{\rm free}=|\omega|$.
		Red color is for the model with negative $\omega_e^{\rm free}=-|\omega|$.
		Separatrices pass in and out of the saddles equilibria.
		Upper left subfigure: $\omega = 65 < \omega_l^c$,
		upper right subfigure: $\omega = \omega_l^c \approx 73.732$,
		lower left subfigure: $\omega = \omega_l \approx 77.7583$,
		lower right subfigure: $\omega = 79 > \omega_l$.
	}
	\label{fig:lock-in illustration}
\end{figure*}

For the considered model boundary values $\omega_l$ and $\omega_l^c$ are determined as follows: The system being in an equilibrium state is exposed to an abrupt change of $\omega_e^{\rm free}$, and the corresponding trajectory of the system after the switch tends to the nearest unstable equilibrium by the corresponding saddle separatrix.
In other words, $\sup\limits_{t>0} |\theta_e(0) - \theta_e(t)| = \pi$ for $\theta_e(0) = 2\pi$ (see Fig.~\ref{fig:lock-in illustration}, lower left picture) and $\sup\limits_{t>0} |\theta_e(0) - \theta_e(t)| = 2\pi$ for $\theta_e(0) = 3\pi$ (see Fig.~\ref{fig:lock-in illustration}, upper right picture).
For a larger $\omega_e^{\rm free}$ supremum 
$\sup\limits_{t>0} |\theta_e(0) - \theta_e(t)| > 2\pi$ and cycle slipping occurs.
Since the lock-in range is defined as a half-open interval, boundary values $\omega_e^{\rm free} = \omega_l$ and $\omega_e^{\rm free} = \omega_l^c$ are not included in it.

Using changes of variables we represent system \eqref{eq:PLL-model} as the first-order differential equation \cite{Belyustina-1959, KuznetsovBAYY-2019}, analytically integrate it on the linear segments, formulate, and prove the theorem providing an exact value for the conservative lock-in range.

\begin{theorem}\label{theorem: concervative lock-in}
	The conservative lock-in frequency of model \eqref{eq:PLL-model} with triangular PD characteristic \eqref{eq:piecewise-linear PD characteristic} is $\omega_l^c$ which is the unique solution of system of two variables $(\omega_l^c,\ y_{\rm AB})$:
	\begin{equation}\label{eq:concervative lock-in}
	\begin{aligned}
	&\begin{cases}
	(2\omega_l^c)^2
	\Big(
	\sqrt{\frac{\tau_1 + \tau_2}{\frac{2}{\pi} K_{\rm vco}}} - \frac{\eta - \kappa}{\frac{2}{\pi}K_{\rm vco}}
	\Big)^{\frac{\kappa - \eta}{\kappa}}
	\Big(
	\sqrt{\frac{\tau_1 + \tau_2}{\frac{2}{\pi} K_{\rm vco}}} - \frac{\eta + \kappa}{\frac{2}{\pi}K_{\rm vco}}
	\Big)^{\frac{\kappa + \eta}{\kappa}}
	=\\
	=
	\Big(
	y_{\rm AB} - (\eta - \kappa)\frac{\omega_l^c + K_{\rm vco}}{\frac{2}{\pi}K_{\rm vco}}
	\Big)^{\frac{\kappa - \eta}{\kappa}}
	\Big(
	y_{\rm AB} -  (\eta + \kappa)\frac{\omega_l^c + K_{\rm vco}}{\frac{2}{\pi}K_{\rm vco}}
	\Big)^{\frac{\kappa + \eta}{\kappa}},\\
	\begin{sqcases}
	\big(y_{\rm AB} - (\xi - \rho)\frac{\omega_l^c + K_{\rm vco}}{\frac{2}{\pi} K_{\rm vco}}\big)^{\frac{\rho - \xi}{\rho}}
	\big(y_{\rm AB} - (\xi + \rho)\frac{\omega_l^c + K_{\rm vco}}{\frac{2}{\pi} K_{\rm vco}}\big)^{\frac{\rho + \xi}{\rho}}
	=\\
	=
	(\kappa - \eta + \xi - \rho)^{\frac{\rho - \xi}{\rho}}
	\cdot\\
	\cdot
	(\kappa - \eta + \xi + \rho)^{\frac{\rho + \xi}{\rho}}
	\big(\frac{K_{\rm vco} - \omega_l^c}{\frac{2}{\pi} K_{\rm vco}}
	\big)^2, \quad \text{if} \quad \xi > 1, \\
	\frac{K_{\rm vco} + \omega_l^c}
	{K_{\rm vco} + \omega_l^c - \frac{2}{\pi}K_{\rm vco}y_{\rm AB}} + \ln(2|y_{\rm AB} - \frac{\pi(K_{\rm vco} + \omega_l^c)}{2 K_{\rm vco}}|)
	=\\
	=
	\frac{1}{\kappa - \eta + 1} + \ln\left(2(\kappa - \eta + 1)\frac{\pi(K_{\rm vco} - \omega_l^c)}{2K_{\rm vco}}\right), \quad \text{if} \quad \xi = 1, \\
	\frac{1}{2}\ln(y_{\rm AB}^2 - 2\xi y_{\rm AB}\frac{\pi(K_{\rm vco} + \omega_l^c)}{2 K_{\rm vco}} + (\frac{\pi(K_{\rm vco} + \omega_l^c)}{2 K_{\rm vco}})^2)
	- \\
	-
	\frac{\xi}{\rho}
	\arctan\Big(
	\frac{y_{\rm AB} - \xi\frac{\pi(K_{\rm vco} + \omega_l^c)}{2 K_{\rm vco}}}
	{-(\frac{\pi(K_{\rm vco} + \omega_l^c)}{2 K_{\rm vco}})\rho}
	\Big)
	+
	\frac{\xi}{\rho}
	\arctan\Big(
	\frac{\kappa - \eta + \xi}
	{\rho}
	\Big)
	=\\
	=
	\frac{1}{2}\ln\left(
	\left((\kappa - \eta)^2 + 2\xi(\kappa - \eta) + 1\right)
	\left(\frac{K_{\rm vco} - \omega_l^c}{\frac{2}{\pi}K_{\rm vco}}\right)^2\right)
	+\\
	+
	\frac{\pi \xi}{\rho}, \quad \text{if} \quad \xi < 1 
	\end{sqcases}
	\end{cases}
	\end{aligned}
	\end{equation}
	where
	\begin{equation*}
	\begin{aligned}
	&\xi = \frac{\frac{2}{\pi}\tau_2 K_{\rm vco} + 1}{2\sqrt{\frac{2}{\pi}K_{\rm vco}(\tau_1+\tau_2)}},\quad
	\eta = \frac{\frac{2}{\pi}\tau_2K_{\rm vco} - 1}{2\sqrt{\frac{2}{\pi}K_{\rm vco}(\tau_1+\tau_2)}},\\
	&\rho = \sqrt{|\xi^2 - 1|},\quad 
	\kappa = \sqrt{\eta^2 + 1}.
	\end{aligned}
	\end{equation*}
\end{theorem}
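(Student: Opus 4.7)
The plan is to reduce the second-order piecewise-linear system~\eqref{eq:PLL-model} to a first-order ODE in the phase plane, integrate that ODE in closed form on every linear segment of the triangular PD characteristic, and then stitch the segment-wise integrals together along the heteroclinic separatrix trajectory that, by the definition of the conservative lock-in range, defines the boundary value $\omega_l^c$.

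First, following \cite{Belyustina-1959, KuznetsovBAYY-2019}, I would introduce a scalar variable proportional to $\dot\theta_e$ (shifted so that it vanishes at the equilibria), view the trajectory as a graph $y=y(\theta_e)$, and divide the two equations of~\eqref{eq:PLL-model} to eliminate time. On every interval where $v_e$ is affine this yields an explicit first-order Abel-type ODE whose characteristic roots coincide with the eigenvalues of the linearisation of~\eqref{eq:PLL-model} at the equilibrium living on that segment. On the decreasing segments (slope $-2/\pi$) the roots are real and of opposite sign, namely $\eta\pm\kappa$ with $\kappa=\sqrt{\eta^2+1}$, matching the saddle structure already established before the statement. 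On the increasing segments (slope $+2/\pi$) the roots are $\xi\pm\rho$ with $\rho=\sqrt{|\xi^2-1|}$, and their nature depends on whether $\xi>1$ (real distinct, node), $\xi=1$ (real double, degenerate node), or $\xi<1$ (complex conjugate, focus). Standard integration in these three cases produces, respectively, (a) a product of two power-law factors, (b) a rational-plus-logarithm primitive, and (c) a logarithm-plus-arctangent primitive, which matches exactly the three branches of the \texttt{sqcases} environment in the statement.

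The third step is to pin down which separatrix drives the conservative lock-in boundary. As described just before the statement and illustrated in Fig.~\ref{fig:lock-in illustration}, the relevant trajectory is the heteroclinic connection that leaves saddle $A$ near $\theta_e=3\pi$ tangent to its unstable eigenvector, crosses the breakpoint $\theta_e=5\pi/2$ with an intermediate value denoted $y_{AB}$, traverses the entire central increasing segment $[3\pi/2,5\pi/2]$, and finally arrives at saddle $B$ near $\theta_e=\pi$ tangent to its stable eigenvector. Because at both saddles the trajectory is constrained to be tangent to the known (saddle) eigenvectors, the two decreasing-segment integrations $A\to 5\pi/2$ and $3\pi/2\to B$ have exactly the same analytic form and can be combined into a single identity, which yields the first displayed equation of the theorem, relating $\omega_l^c$ and $y_{AB}$ through the exponents $(\kappa\pm\eta)/\kappa$ and the normalising factor $\sqrt{(\tau_1+\tau_2)/(\tfrac{2}{\pi}K_{\rm vco})}$. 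Integrating across the central increasing segment from $y_{AB}$ at $\theta_e=5\pi/2$ to the value at $\theta_e=3\pi/2$ determined by the matching with the decreasing piece then produces the second equation, with its three sub-cases corresponding to the three damping regimes above.

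The most delicate part, in my view, will be the precise bookkeeping that converts the eigenvector-tangency conditions at the two saddles and the continuity matching at the two breakpoints into the specific constants that appear on the right-hand sides of the displayed equations, in particular the factors $(K_{\rm vco}\pm\omega_l^c)/(\tfrac{2}{\pi}K_{\rm vco})$ coming from the exact positions of the saddles, together with the consistent choice of the arctangent branch in the focus case so that it agrees with the direction of travel along the separatrix. Once those are in place, uniqueness of the solution $(\omega_l^c,y_{AB})$ of the displayed system should follow from the monotone dependence of the separatrix on $\omega_e^{\rm free}$, which can be extracted by implicit differentiation of the two equations and leads to a single transverse intersection in the $(\omega_l^c,y_{AB})$-plane.
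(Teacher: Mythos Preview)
Your reduction to a first-order phase-plane ODE, the segment-wise closed-form integration, and the three-way split on $\xi\gtrless 1$ are all exactly what the paper does, and your identification of the roots $\eta\pm\kappa$ and $\xi\pm\rho$ with the linearisation eigenvalues is correct.

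However, there is a genuine gap in the geometry of the defining trajectory. The conservative lock-in boundary is \emph{not} a heteroclinic connection between two saddles of the post-switch system. Before the abrupt change the frequency error is $-\omega$ and the system sits at the \emph{pre-switch} saddle $\bigl(-\tfrac{\tau_1\omega}{K_{\rm vco}},\,-\pi+\tfrac{\pi\omega}{2K_{\rm vco}}\bigr)$; after the switch to $+\omega$ that point is no longer an equilibrium of the governing dynamics --- the post-switch saddle on the same period sits at $-\pi-\tfrac{\pi\omega}{2K_{\rm vco}}$, a different point. The relevant trajectory is therefore the \emph{stable} separatrix of the single post-switch saddle near $\theta_e=\pi$, traced backward through the breakpoints $\theta_e=\pi/2$ and $\theta_e=-\pi/2$, and the lock-in condition is that this separatrix passes through the \emph{specific point} left over from the pre-switch equilibrium. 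In the paper's $y$-coordinate this point has the value $y=2\omega_l^c\sqrt{\pi(\tau_1+\tau_2)/(2K_{\rm vco})}$, and it is precisely this value that produces the factor $(2\omega_l^c)^2$ and the terms $\sqrt{(\tau_1+\tau_2)/(\tfrac{2}{\pi}K_{\rm vco})}$ on the left-hand side of the first displayed equation. Your plan to ``combine the two decreasing-segment integrations $A\to5\pi/2$ and $3\pi/2\to B$'' because both ends are tangent to saddle eigenvectors therefore fails: only the $\theta_e\approx\pi$ end carries an eigenvector constraint, while the other end carries a numerical point constraint. (The heteroclinic orbit you describe between two post-switch saddles does exist for a special parameter value, but that is the separatrix cycle discussed in the pull-in analysis, not the lock-in condition.)

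Concretely, the paper obtains $S(\tfrac{\pi}{2})=(\kappa-\eta)\bigl(\tfrac{\pi}{2}-\tfrac{\pi\omega_l^c}{2K_{\rm vco}}\bigr)$ from the saddle eigenvector, integrates across domain~$B$ to get the second equation linking $y_{\rm AB}=S(-\tfrac{\pi}{2})$ to $S(\tfrac{\pi}{2})$, and then integrates across domain~$A$ from $y_{\rm AB}$ to the prescribed endpoint value $2\omega_l^c\sqrt{\pi(\tau_1+\tau_2)/(2K_{\rm vco})}$ at $\theta_e=\tfrac{\pi\omega_l^c}{2K_{\rm vco}}-\pi$, which gives the first equation. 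Once you replace your heteroclinic endpoint by this pre-switch point, the rest of your outline (including the branch bookkeeping for the arctangent and the monotonicity argument for uniqueness) goes through essentially as you wrote it.
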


\begin{proof}[Proof of Theorem~\ref{theorem: concervative lock-in}]
	The proof given in Appendix is based on the fact that system \eqref{eq:PLL-model} is piecewise-linear and can be integrated analytically on the linear segments.
\end{proof}

	\section{Computer simulation}
	Based on Theorem~\ref{theorem: concervative lock-in} an analytical-numerical method of the conservative lock-in range calculation was implemented (see Appendix~B and Fig.~\ref{fig:omega_l_c}).

	\begin{figure}[h]
	\centering
	\includegraphics[width=\linewidth]{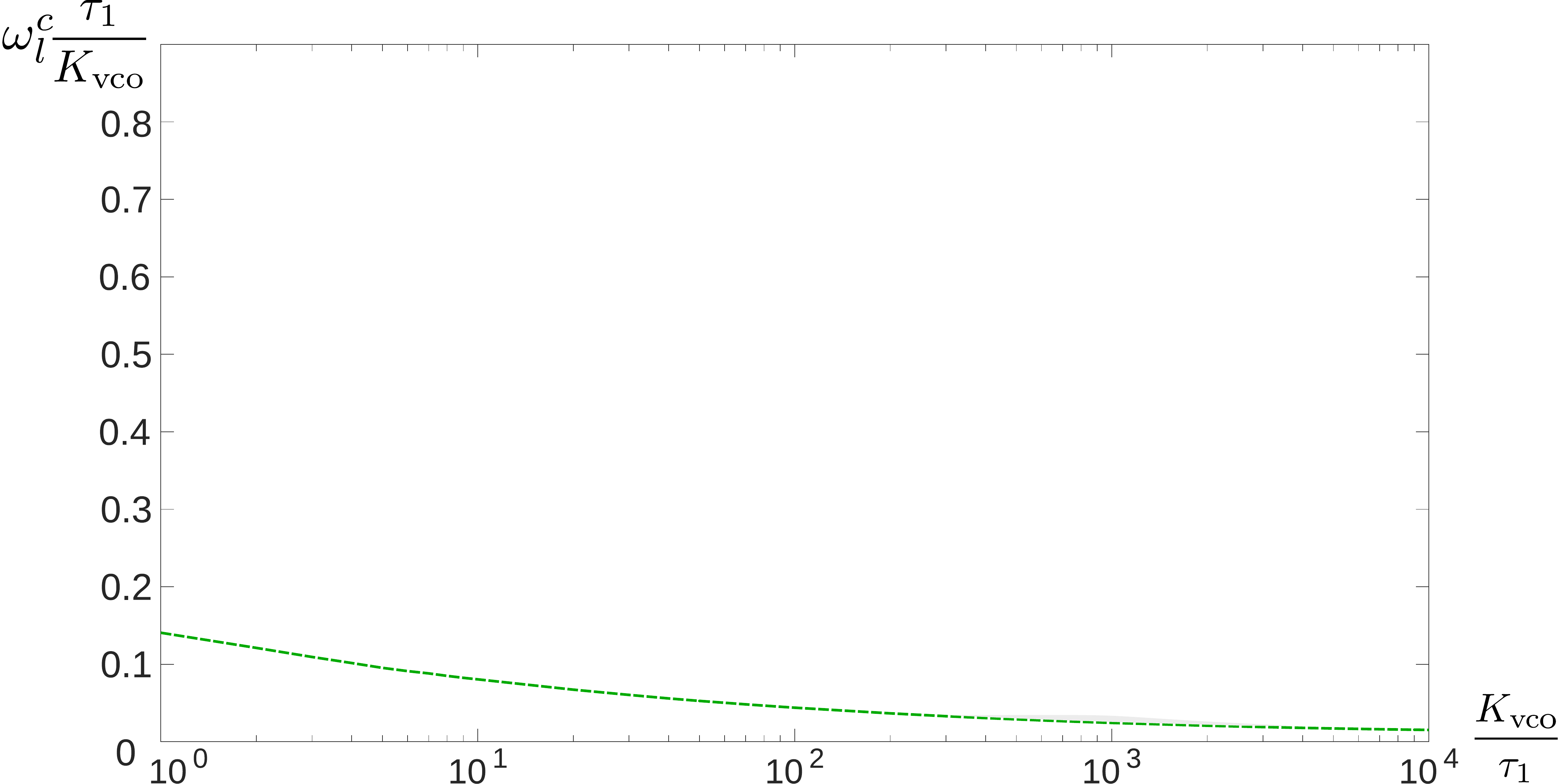}
	\caption{The conservative lock-in frequency.
		Parameters: 
		$\tau_1 = 0.5$,
		$\tau_2 = 0.0225$. 
	}
	\label{fig:omega_l_c}
\end{figure}
	
	\section{Conclusions}
	\label{sec:conclusion}
	
   In this work, the exact value of the conservative lock-in range was obtained for a classical PLL with lead-lag filter and triangular phase detector characteristic.

	\addtolength{\textheight}{-1cm}

	\section*{Appendix A: Proof of Theorem~\ref{theorem: concervative lock-in}}
	
	\begin{proof}
		Let's find the conservative lock-in range of model \eqref{eq:PLL-model} with triangular PD characteristic \eqref{eq:piecewise-linear PD characteristic}.
		The conservative lock-in frequency can be determined by such an abrupt change of $\omega_e^{\rm free}$ that the corresponding trajectory tends to the nearest unstable equilibrium (by the corresponding separatrix).
		Suppose that initially the frequency error was equal to $\omega_e^{\rm free} = -\omega < 0$, but then changed to $\omega_e^{\rm free} = \omega > 0$.
		Hence, initially the system was in equilibrium
		$x^{\rm eq} = -\frac{\tau_1\omega}{K_{\rm vco}},\quad
		\theta_e^{\rm eq} = -\pi + \frac{\frac{\pi}{2}\omega}{K_{\rm vco}}$, but after the switch the corresponding trajectory tends to
		$x^{\rm eq} = \frac{\tau_1\omega}{K_{\rm vco}},\quad
		\theta_e^{\rm eq} = \frac{\frac{\pi}{2}\omega}{K_{\rm vco}}$ without cycle slipping if $\omega < \omega_l^c$.
		
		Such $\omega_l^c$ is determined by such frequency error $\omega_e^{\rm free}$ that a trajectory being in unstable equilibrium (before the switch)
		$x^{\rm eq} = -\frac{\tau_1\omega_l^c}{K_{\rm vco}},\quad
		\theta_e^{\rm eq} = -\pi + \frac{\frac{\pi}{2}\omega_l^c}{K_{\rm vco}}$
		tends to the closest unstable equilibrium (after the switch)
		$x^{\rm eq} = \frac{\tau_1\omega_l^c}{K_{\rm vco}},\quad
		\theta_e^{\rm eq} = \pi - \frac{\frac{\pi}{2}\omega_l^c}{K_{\rm vco}}$
		by the corresponding separatrix.	
		Thus, the conservative lock-in frequency $\omega_l^c$ corresponds to the case
		\begin{equation}\label{eq:lock-in relations with Q}
		\begin{aligned}
		& -\frac{\tau_1\omega_l^c}{K_{\rm vco}}
		=
		Q\left(\frac{\frac{\pi}{2}\omega_l^c}{K_{\rm vco}} - \pi,\ \omega_l^c\right)
		\end{aligned}
		\end{equation}
		where $\frac{\tau_1\omega_e^{\rm free}}{K_{\rm vco}}$ is $x$-coordinate of equilibrium of model \eqref{eq:PLL-model} and $x = Q(\theta_e, \omega_e^{\rm free})$ is the lower separatrix of saddle equilibrium $(\frac{\tau_1\omega_e^{\rm free}}{K_{\rm vco}},\ \pi - \frac{\frac{\pi}{2}\omega_e^{\rm free}}{K_{\rm vco}})$ (see Fig.~\ref{fig:Q}).

		\begin{figure}[h]
			\centering
			\includegraphics[width=0.93\linewidth]{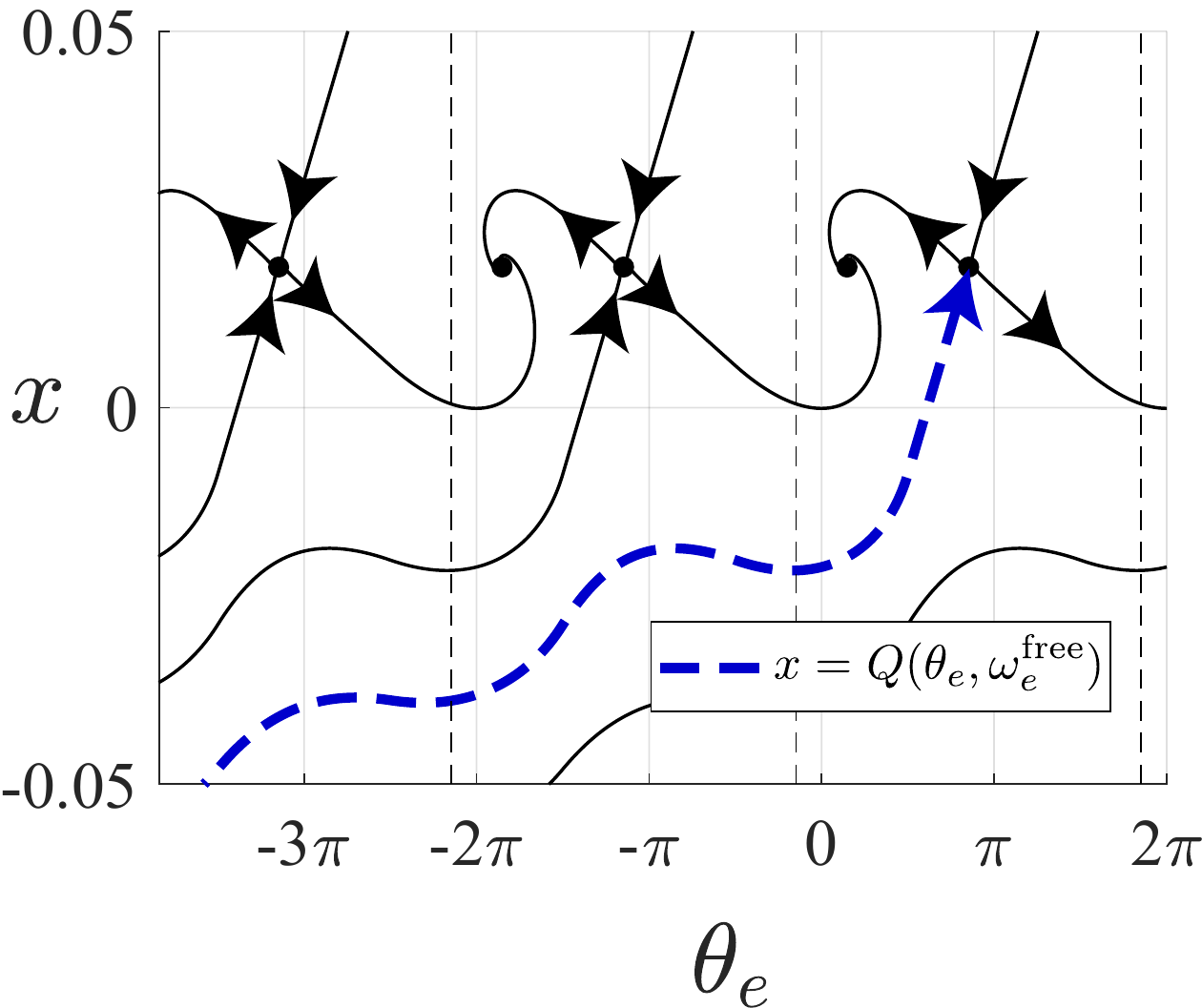}
			\caption{Separatrix $x = Q(\theta_e, \omega_e^{\rm free})$ of the phase plane of \eqref{eq:PLL-model}.
				Parameters:
				$\tau_1 = 0.0633$,
				$\tau_2 = 0.0225$,
				$K_{\rm vco} = 250$,
				$\omega_e^{\rm free} = 73.732$.
			}
			\label{fig:Q}
		\end{figure}
	
		After the change of variables
		\begin{equation}\label{eq:change of variables from x to y}
		\begin{aligned}
		& y = \sqrt{\frac{\pi(\tau_1 + \tau_2)}{2 K_{\rm vco}}}\omega_e^{\rm free}  - \sqrt{\frac{\pi K_{\rm vco}}{2(\tau_1 + \tau_2)}}
		\Big(x + \tau_2 v_e(\theta_e)\Big),\\
		&\tau = \sqrt{\frac{2 K_{\rm vco}}{\pi(\tau_1 + \tau_2)}} t
		\end{aligned}
		\end{equation}
		system \eqref{eq:PLL-model} in intervals $\theta_e(t)\in (- \frac{\pi}{2} + 2 \pi m,\ \frac{\pi}{2} + 2 \pi m)$ and $ \theta_e(t)\in(\frac{\pi}{2} + 2 \pi m,\ - \frac{\pi}{2} + 2 \pi (m + 1))$, $m\in\mathbb{Z}$ is represented as follows:
		\begin{equation}\label{eq:PLL-triangular-after_change_of_variables}
		\begin{aligned}
		& \dot y
		=
		- 
		\frac{\pi}{2}v_e(\theta_e)
		-
		\frac{\sqrt{\pi}}{\sqrt{2 K_{\rm vco}(\tau_1 + \tau_2)}}
		(1 + K_{\rm vco}\tau_2v_e^\prime(\theta_e)
		)y
		+
		\frac{\pi\omega_e^{\rm free}}{2K_{\rm vco}} ,\\
		&\dot \theta_e = y.
		\end{aligned}
		\end{equation}

		Upper separatrix $y = S(\theta_e)$ of the phase plane of \eqref{eq:PLL-triangular-after_change_of_variables} corresponds to separatrix $x = Q(\theta_e, \omega_e^{\rm free})$ from \eqref{eq:PLL-model} (see Fig.~\ref{fig:Q} and Fig.~\ref{fig:separatrix integration}) and has the form
		\begin{equation*}
		\begin{aligned}
		&S(\theta_e) = \sqrt{\frac{\pi(\tau_1 + \tau_2)}{2 K_{\rm vco}}}\omega_e^{\rm free} -\\
		&-
		\sqrt{\frac{\pi K_{\rm vco}}{2(\tau_1 + \tau_2)}}
		\Big(Q(\theta_e,\ \omega_e^{\rm free}) + \tau_2 v_e(\theta_e)\Big).
		\end{aligned}
		\end{equation*}
		Thus, relation \eqref{eq:lock-in relations with Q} takes the form
		\begin{equation}\label{eq:S and omega_l_c}
		\begin{aligned}
		&S\left(\frac{\pi\omega_l^c}{2K_{\rm vco}} - \pi\right) 
		= 
		2\omega_l^c\sqrt{\frac{\pi(\tau_1 + \tau_2)}{2 K_{\rm vco}}}.
		\end{aligned}
		\end{equation}

\begin{figure}[h]
	\centering
	\includegraphics[width=\linewidth]{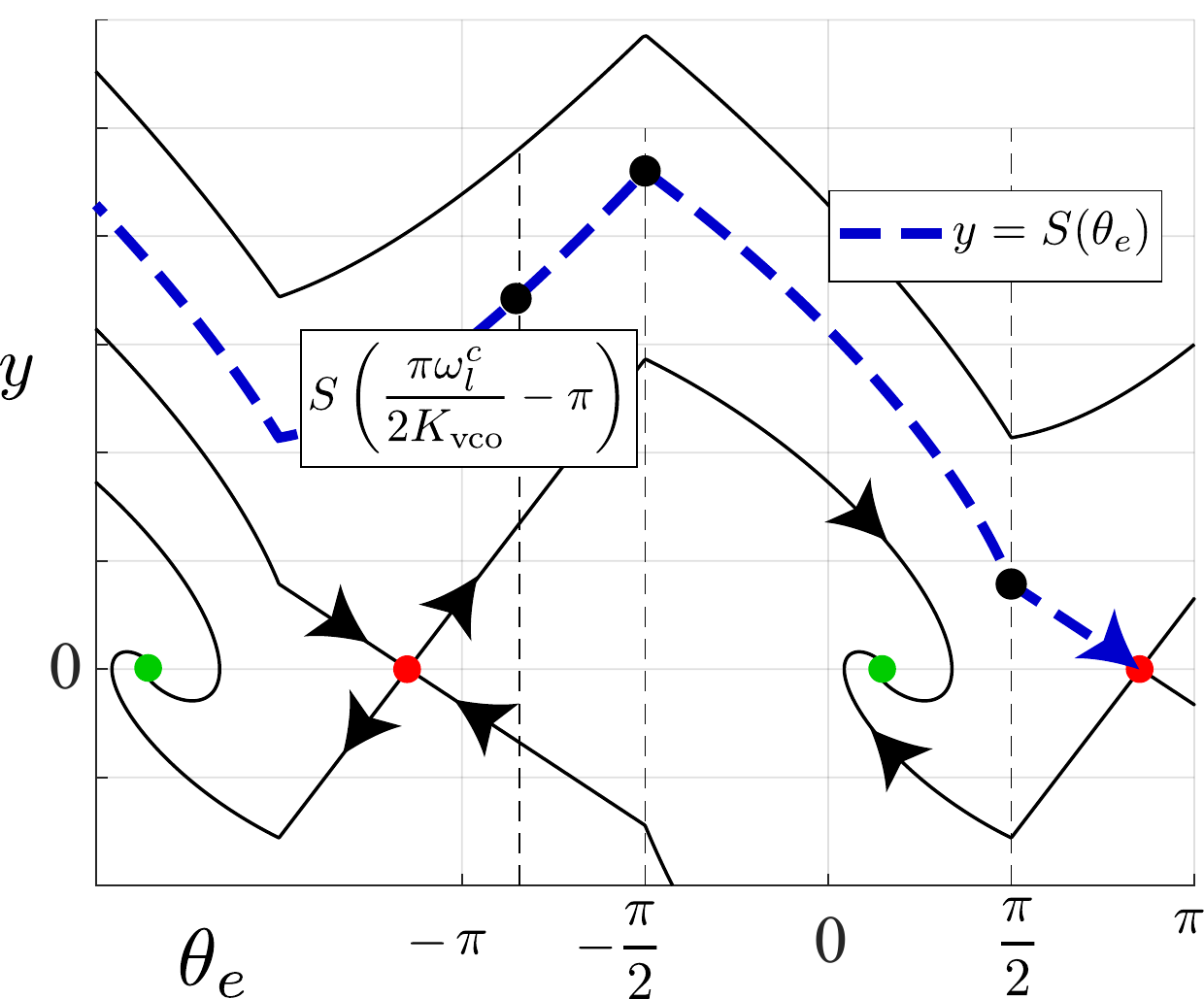}
	\caption{Separatrix $y = S(\theta_e)$ of the phase plane of \eqref{eq:PLL-triangular-after_change_of_variables} integration.
		Firstly, we compute $S(\frac{\pi}{2})$ and use it as the initial data of the Cauchy problem.
		Secondly, finding its solution in the domain B, we compute $S(-\frac{\pi}{2})$, which is used as the initial data of the Cauchy problem.
		Its solution in the domain A is used for the conservative lock-in frequency $\omega_l^c$ computation due to \eqref{eq:S and omega_l_c}.
		Parameters:
		$\tau_1 = 0.0633$,
		$\tau_2 = 0.0225$,
		$K_{\rm vco} = 250$,
		$\omega_e^{\rm free} = 73.732$.
	}
	\label{fig:separatrix integration}
\end{figure}

	The computation of $\omega_l^c$ consists of the following stages. 
	Let's divide the phase plane to the following domains:
	\begin{itemize}
		\item $A = \{(y,\ \theta_e) \mid \frac{\pi}{2} - 2\pi \le \theta_e(t) \le -\frac{\pi}{2}$; $\theta_e,y \in \mathbb{R}\}$,
		\item $B = \{(y,\ \theta_e) \mid -\frac{\pi}{2} \le \theta_e(t) \le \frac{\pi}{2}$; $\theta_e,y \in \mathbb{R}\}$.
	\end{itemize}
	In the open domains, system \eqref{eq:PLL-triangular-after_change_of_variables} is a linear one and can be integrated analytically.
	Firstly, we compute $S(\frac{\pi}{2})$, which is possible due to the continuity of \eqref{eq:PLL-model}, and use it as the initial data of the Cauchy problem (see Fig.~\ref{fig:separatrix integration}).
	Secondly, finding its solution in the domain B, we compute $S(-\frac{\pi}{2})$, which is used as the initial data of the Cauchy problem.
	Its solution in the domain A is used for the conservative lock-in frequency $\omega_l^c$ computation due to \eqref{eq:S and omega_l_c}.
	
	\subsection{S($\frac{\pi}{2})$ value}
		The saddle separatrix is locally described by the saddle's eigenvectors
		\begin{equation*}
		\begin{aligned}
		&
		V_+^{\rm s}
		=
		\begin{pmatrix}
		1\\
		-\kappa + \eta
		\end{pmatrix},
		\quad
		V_-^{\rm s}
		=
		\begin{pmatrix}
		1\\
		-\eta - \kappa
		\end{pmatrix}
		\end{aligned}
		\end{equation*}
		where
		\begin{equation*}
		\begin{aligned}
		&
		\eta
		=
		\frac{\frac{2}{\pi}\tau_2K_{\rm vco} - 1}{2\sqrt{\frac{2}{\pi}K_{\rm vco}(\tau_1+\tau_2)}},\\
		&\kappa = \sqrt{\eta^2 + 1}.
		\end{aligned}
		\end{equation*}

		Eigenvector $V^s_-$ points to a saddle and $V^s_+$ has the opposite direction.
		Since in the considered domain the system is a linear one, then the separatrix coincides with the line corresponding to $V_-^s$:
		\begin{equation}\label{eq:saddle_separatrix}
		\begin{aligned}
		&
		S(\theta_e) = (\kappa - \eta)(-\theta_e + \pi - \frac{\pi\omega_e^{\rm free}}{2K_{\rm vco}}),
		\quad \frac{\pi}{2} < \theta_e < \pi.
		\end{aligned}
		\end{equation}
		Let's obtain the limit value in $\theta_e = \frac{\pi}{2}$:
		\begin{equation*}
		\begin{aligned}
		& S(\frac{\pi}{2}) = (\kappa - \eta)(\frac{\pi}{2} - \frac{\pi\omega_e^{\rm free}}{2K_{\rm vco}}).
		\end{aligned}
		\end{equation*}

	\subsection{Analytical integration in domain B}
	
	In domain B, $v_e(\theta_e) = \frac{2}{\pi}\theta_e, v_e^\prime(\theta_e) = \frac{2}{\pi}$ and \eqref{eq:PLL-triangular-after_change_of_variables} can be rewritten as
	\begin{equation}\label{eq:linear equation in simmetric interval}
	\begin{aligned}
	&\dot y = - \theta_e - 2\xi y  + \frac{\pi\omega_e^{\rm free}}{2 K_{\rm vco}},\\
	&\dot \theta_e = y
	\end{aligned}
	\end{equation}	
	where
	\begin{equation*}
	\begin{aligned}
	&\xi 
	=
	\frac{\frac{2}{\pi}\tau_2 K_{\rm vco} + 1}{2\sqrt{\frac{2}{\pi}K_{\rm vco}(\tau_1+\tau_2)}} > 0.
	\end{aligned}
	\end{equation*}
	In the domains $\{y>0\}$ and $\{y<0\}$, variable $\theta_e(t)$ changes monotonically and the behaviour of system \eqref{eq:linear equation in simmetric interval} can be described by the first-order differential equation:
	\begin{equation}\label{eq:Chini equation in symmetric interval}
	\begin{aligned}
	& \frac{dy}{d\theta_e} = 
	-2\xi - \frac{\theta_e - \frac{\omega_e^{\rm free}}{\frac{2}{\pi} K_{\rm vco}}}{y}.
	\end{aligned}
	\end{equation}
	The obtained equation is Chini's equation \cite{Chini-1924, ChebK-2003}, which is a generalization of Abel and Riccati equations.
	The change of variables $z = \frac{y}{\theta_e - \frac{\omega_e^{\rm free}}{\frac{2}{\pi} K_{\rm vco}}}$ maps equation \eqref{eq:Chini equation in symmetric interval} into a separable one:
	\begin{equation}\label{eq:PLL-equation-to_integrate in symmetric interval}
	\begin{aligned}
	& \frac{zdz}{z^2 + 2\xi z + 1} = -\frac{d\theta_e}{\theta_e - \frac{\omega_e^{\rm free}}{\frac{2}{\pi} K_{\rm vco}}}.
	\end{aligned}
	\end{equation}
	If $\theta_e\ne \frac{\omega_e^{\rm free}}{\frac{2}{\pi} K_{\rm vco}}$ and $z^2 + 2\xi z + 1\ne0$ then the solutions of system \eqref{eq:Chini equation in symmetric interval} and system \eqref{eq:PLL-equation-to_integrate in symmetric interval} coincide in domains $0 < \theta_e < \frac{\pi}{2}$ and $-\frac{\pi}{2} < \theta_e < 0$.
	Depending on the type of an asymptotically stable equilibrium, the following cases appear:
		\begin{itemize}
		\item $\xi > 1$ (the equation $z^2 + 2\xi z + 1 = 0$ corresponds the eigenvectors of the stable node),
		\item $\xi = 1$ (the equation $z^2 + 2\xi z + 1 = (z + \xi)^2 = 0$ corresponds the eigenvector of the stable degenerate node),
		\item $0 < \xi < 1$ (here the case $z^2 + 2\xi z + 1 = 0$ is not possible).
	\end{itemize}
	
	It can be shown that if $\xi\ge1$ then in domain $B$ separatrix $y = S(\theta_e)$ satisfies $N(y, \theta_e) = N(S(\frac{\pi}{2}),\ \frac{\pi}{2})$ where
	\begin{equation}\label{eq:N solution case 1 and 2}
	\begin{aligned}
	& N(y, \theta_e) 
	=
	\frac{1}{2}
	\ln
	\Big(
	\big(y + (\xi - \rho)(\theta_e - \frac{\pi\omega_e^{\rm free}}{2 K_{\rm vco}})\big)^{\frac{\rho - \xi}{\rho}}
	\cdot\\
	&\cdot
	\big(y + (\xi + \rho)(\theta_e - \frac{\pi\omega_e^{\rm free}}{2 K_{\rm vco}})\big)^{\frac{\rho + \xi}{\rho}}
	\Big), \quad \xi>1,\\
	&N(y, \theta_e) 
	=
	\frac{\theta_e - \frac{\pi\omega_e^{\rm free}}{2 K_{\rm vco}}}{y + \theta_e - \frac{\pi\omega_e^{\rm free}}{2 K_{\rm vco}}} + \ln(2|y + \theta_e - \frac{\pi\omega_e^{\rm free}}{2 K_{\rm vco}}|),
	\quad \xi = 1,\\
	&\rho = \sqrt{|\xi^2 - 1|}.
	\end{aligned}
	\end{equation}
	Similarly, if $\xi < 1$ then
	\begin{itemize}
		\item in domain $-\frac{\pi}{2} < \theta_e < \frac{\pi\omega_e^{\rm free}}{2K_{\rm vco}}$ separatrix $y = S(\theta_e)$ satisfies $N(y, \theta_e) = N(S(\frac{\pi}{2}),\ \frac{\pi}{2})  + \frac{\pi \xi}{\rho}$,
		\item in domain $\frac{\pi\omega_e^{\rm free}}{2K_{\rm vco}} < \theta_e < \frac{\pi}{2}$ separatrix $y = S(\theta_e)$ satisfies $N(y, \theta_e) = N(S(\frac{\pi}{2}),\ \frac{\pi}{2})$
	\end{itemize} 
where
	\begin{equation}\label{eq:N solution case 3}
\begin{aligned}
& N(y, \theta_e)
=
\frac{1}{2}\ln(y^2 + 2\xi y(\theta_e - \frac{\pi\omega_e^{\rm free}}{2K_{\rm vco}}) + (\theta_e - \frac{\pi\omega_e^{\rm free}}{2K_{\rm vco}})^2)
- \\
&-
\frac{\xi}{\rho}
\arctan\Big(
\frac{y + \xi(\theta_e - \frac{\pi\omega_e^{\rm free}}{2K_{\rm vco}})}
{(\theta_e - \frac{\pi\omega_e^{\rm free}}{2K_{\rm vco}})\rho}
\Big).
\end{aligned}
\end{equation}
	
Let us denote $y_{\rm AB} = S(-\frac{\pi}{2})$ and use this value as the initial data of the Cauchy problem:
\begin{equation}\label{eq:N equations for B domain}
\begin{aligned}
&
N\Big(y_{\rm AB},\ -\frac{\pi}{2}\Big) 
=
N(S(\frac{\pi}{2}),\ \frac{\pi}{2}) \quad \xi \ge 1,\\
&
N\Big(y_{\rm AB},\ -\frac{\pi}{2}\Big) 
=
N(S(\frac{\pi}{2}),\ \frac{\pi}{2}) + \frac{\pi \xi}{\rho} \quad \xi < 1.
\end{aligned}
\end{equation}
	Taking into account equations \eqref{eq:N solution case 1 and 2} and \eqref{eq:N solution case 3}, equations \eqref{eq:N equations for B domain} provide the last three formulae in \eqref{eq:concervative lock-in}.

	\subsection{Analytical integration in domain A}	
	
	In domain A, $v_e(\theta_e) = -\frac{2}{\pi}\theta_e + 2, v_e^\prime(\theta_e) = -\frac{2}{\pi}$ and \eqref{eq:PLL-triangular-after_change_of_variables} can be rewritten as
	\begin{equation}\label{eq:linear equation in nonsimmetric interval}
	\begin{aligned}
	&\dot y = (\theta_e + \pi) + 2\eta y  + \frac{\pi\omega_e^{\rm free}}{2K_{\rm vco}},\\
	&\dot \theta_e = y.
	\end{aligned}
	\end{equation}	
	In the domains $\{y>0\}$ and $\{y<0\}$, variable $\theta_e(t)$ changes monotonically and the behaviour of system \eqref{eq:linear equation in nonsimmetric interval} can be described by the first-order differential equation:
	\begin{equation}\label{eq:Chini equation in nonsymmetric interval}
	\begin{aligned}
	& \frac{dy}{d\theta_e} = 
	\frac{2}{\mu}\eta + \frac{\theta_e + \pi + \frac{\pi\omega_e^{\rm free}}{2K_{\rm vco}}}{ y}.
	\end{aligned}
	\end{equation}
	The change of variables $z = \frac{y}{\theta_e + \pi +  \frac{\pi\omega_e^{\rm free}}{2K_{\rm vco}}}$ maps equation \eqref{eq:Chini equation in nonsymmetric interval} into a separable one:
	\begin{equation}\label{eq:PLL-equation-to_integrate in nonsymmetric interval}
	\begin{aligned}
	& \frac{z dz}{z^2 - 2\eta z - 1} = -\frac{d\theta_e}{\theta_e + \pi + \frac{\pi\omega_e^{\rm free}}{2K_{\rm vco}}}.
	\end{aligned}
	\end{equation}
	If $\theta_e\ne -  (\frac{\pi\omega_e^{\rm free}}{2K_{\rm vco}} + \pi)$ and $z^2 - 2\eta z - \ne 0$ then the solutions of system \eqref{eq:Chini equation in nonsymmetric interval} and system \eqref{eq:PLL-equation-to_integrate in nonsymmetric interval} coincide.
	
	It can be shown that in domain $A$ separatrix $y = S(\theta_e)$ satisfies $M(y, \theta_e) = M(y_{\rm AB},\ -\frac{\pi}{2})$ where
	\begin{equation}\label{eq:M}
	\begin{aligned}
	&M(y, \theta_e) 
	=
	\frac{1}{2}
	\ln 
	\Big(
	\Big(
	y + \frac{\theta_e + \pi +  \frac{\pi\omega_e^{\rm free}}{2K_{\rm vco}}}{\kappa + \eta}
	\Big)^{\frac{\kappa - \eta}{\kappa}}
	\cdot\\
	&\cdot
	\Big(
	y +  \frac{\theta_e + \pi +  \frac{\pi\omega_e^{\rm free}}{2K_{\rm vco}}}{\eta - \kappa}
	\Big)^{\frac{\kappa + \eta}{\kappa}}
	\Big).
	\end{aligned}
	\end{equation}
	Finally, the first equation in \eqref{eq:concervative lock-in} is obtained by consideration \eqref{eq:S and omega_l_c} and \eqref{eq:M}:
	\begin{equation*}
	\begin{aligned}
	&
	M\left(2\omega_l^c\sqrt{\frac{\tau_1 + \tau_2}{k K_{\rm vco}}},\ \frac{(\pi - \frac{1}{k})\omega_l^c}{K_{\rm vco}} - \pi\right)
	= 
	M(y_{\rm AB},\ -\frac{1}{k}).
	\end{aligned}
	\end{equation*}
	
\end{proof}

\section*{Appendix B: Calculation of the conservative lock-in frequency}
Calculation of the conservative lock-in frequency for PLL with lead-lag filter and triangular phase-detector characteristic.
\lstinputlisting{omega_l_conservative_lead_lag.m}



\end{document}